  \pgfplotsset{compat=newest}
\newcommand{\R}{\mathbb{R}}
\newcommand{\bmat}[1]{\begin{bmatrix}#1\end{bmatrix}}
\newcommand\norm[1]{\left\lVert#1\right\rVert}
\newtheorem{theorem}{Theorem}
\newtheorem{definition}{Definition}
\newtheorem{example}{Example}
\newtheorem{lemma}{Lemma}
\newtheorem{assumption}{Assumption}
\title{\LARGE \bf
Backward Reachability using Integral Quadratic Constraints for Uncertain Nonlinear Systems
}
\author{He Yin, Peter Seiler and  Murat Arcak
\thanks{Funded in part by the Air Force Office of Scientific Research grant FA9550-18-1-0253, the Office of Naval Research grant N00014-18-1-2209, and  the U.S. National
	Science Foundation grant ECCS-1906164.}
\thanks{H. Yin is with the Department of Mechanical Engineering, University of California, Berkeley {\tt\small he\_yin@berkeley.edu.}}
\thanks{P. Seiler is with the Department of Electrical Engineering and Computer Science, University of Michigan,  Ann Arbor {\tt\small pseiler@umich.edu.}}
\thanks{M. Arcak is with the Department of Electrical Engineering and Computer Sciences, University of California, Berkeley {\tt\small arcak@eecs.berkeley.edu.}}%
}
\begin{document}

\maketitle
\thispagestyle{empty}
\pagestyle{empty}

\begin{abstract}
A method is proposed to compute robust inner-approximations to the backward reachable set for uncertain nonlinear systems. It also produces a robust control law that drives trajectories starting in these sets to the target set. The method merges dissipation inequalities and integral quadratic constraints (IQCs) with both hard and  soft IQC factorizations.  Computational algorithms are presented using the generalized S-procedure and  sum-of-squares techniques.  The use of IQCs in backward reachability  analysis allows for a  variety of perturbations including parametric uncertainty,  unmodeled dynamics, nonlinearities, and uncertain time delays. The method is demonstrated on two examples, including a 6-state quadrotor with actuator uncertainties.
\end{abstract}

\section{Introduction}
The backward reachable set (BRS) is the set of initial conditions whose successors can be driven to the target set at the end of a finite time horizon with an admissible controller. The BRS is of vital importance for safety-critical systems, since it provides a safe envelope for the system to reach the target set and avoid obstacles \cite{MoChen:16}. 

Backward reachability has been studied with several approaches. Occupation measure-based methods \cite{Henrion:14, Majumdar:14, Kousik:16} compute BRS outer-approximations, but do not guarantee reaching the target set. In contrast, the exact BRS is computed in \cite{Mitchell:00, Mitchell:05, Darbon:2016, donggun2019CDC} as the sublevel set of the solution to Hamilton-Jacobi (HJ) partial differential equations (PDEs). Other results provide BRS inner-approximations using relaxed HJ equations \cite{xue2019inner, xue2019robust, Jones:19} and Lyapunov-based methods \cite{Anirudha:13}. 

A shortcoming of the existing  reachability tools is that they rely on accurate system models. Only limited forms of uncertainty have been addressed, such as parametric uncertainty in \cite{Mitchell:00, Kousik:16, xue2019inner, xue2019robust, Jones:19, Anirudha:13} and both parametric uncertainty and $\mathcal{L}_2$ disturbances in our earlier work  \cite{Yin:ACC, Yin:TAC}.

In this paper, we propose a method to compute inner-approximations to the BRS that are robust to a more general class of perturbations. 
We model the uncertain nonlinear system as an interconnection of the nominal system $G$ and the perturbation $\Delta$, as in Fig.~1. The input-output relationship of $\Delta$ is described using the integral quadratic constraint (IQC) framework \cite{Megretski:97, Veenman:16},  which accounts for parametric uncertainties, unmodeled dynamics, slope-bounded nonlinearities, and uncertain  time delays. 
We characterize BRS inner-approximations by sublevel sets of storage functions that satisfy a dissipation inequality that is compatible with IQCs. We derive an algorithm to compute storage functions and associated control laws using the generalized S-procedure \cite{Parrilo:00} and SOS techniques \cite{Antonis:02, Jarvis:05}. These techniques allow us to formulate iterative convex optimization procedures for the computation of storage functions and control laws.

The specific contributions of this paper are threefold. First, we propose a general framework for robust backward reachability of uncertain nonlinear systems, allowing for various types of uncertainty beyond parametric uncertainty. 
Second, we incorporate both hard and soft IQC factorizations in the framework.  The use of dissipation inequalities typically requires IQCs that are valid over any finite time horizon, known as hard IQCs.  However, many IQCs are specified in the frequency domain, which are equivalent to time-domain constraints over  infinite horizons (soft IQCs). We obtain improved BRS bounds by incorporating soft IQCs  by means of the finite-horizon bound derived in \cite{Veenman:16}. Third, we overcome a technical challenge that arises when the input of the perturbation $\Delta$ depends directly on the control command, as in the case of actuator uncertainty. This dependence creates a source of nonconvexity, which we circumvent by introducing auxiliary states in the control law. 

The paper is organized as follows. Section~\ref{sec:hardIQC} presents the problem setup, and the robust backward reachability framework using hard IQCs. The method is adapted to actuator uncertainties in Section~\ref{sec:actuator_uncert}. Section~\ref{sec:softIQC} extends the robust reachability analysis to soft IQCs. Two examples, including a 6-state quadrotor system with actuator uncertainty, are given in Section~\ref{sec:Numeric_ex}. Section~\ref{sec:conclusion} summarizes the results.
\subsection{Notation}
$\mathbb{R}^{m\times n}$ and $\mathbb{S}^n$ denote the set of $m$-by-$n$ real matrices and $n$-by-$n$ real, symmetric matrices. 
$\mathbb{RL}_{\infty}$ is the set of rational functions with real coefficients that have no poles on the imaginary axis. $\mathbb{RH}_{\infty} \subset \mathbb{RL}_{\infty}$ contains functions that are analytic in the closed right-half of the complex plane. $\mathcal{L}^{n_r}_2$ is the space of measureable functions $r: [0, \infty) \rightarrow \R^{n_r}$ with $\norm{r}_2^2 := \int_0^\infty r(t)^\top r(t)dt < \infty$. Define the finite-horizon $\mathcal{L}_2$ norm as $\norm{r}_{2,[0,T]}:= \left(\int_{0}^T r(t)^\top r(t)dt \right)^{1/2}$. If $\norm{r}_{2,[0,T]} < \infty$ then $r \in \mathcal{L}^{n_r}_2[0,T]$. The finite horizon induced $\mathcal{L}_2$ to $\mathcal{L}_2$ norm of an operator is denoted as $\norm{\cdot}_{2 \rightarrow 2, [0,T]}$.
For $\xi \in \mathbb{R}^n$, $\mathbb{R}[\xi]$ represents the set of polynomials in $\xi$ with real coefficients, and $\R^m[\xi]$ and $\R^{m \times p}[\xi]$ denote all vector and matrix valued polynomial functions. The subset $\Sigma[\xi] := \{\pi = \sum_{i=1}^M \pi_i^2: \pi_1,...,\pi_M \in \mathbb{R}[\xi]\}$ of $\mathbb{R}[\xi]$ is the set of SOS polynomials in $\xi$. 
For $\eta \in \mathbb{R}$, and continuous $g : \mathbb{R} \times \mathbb{R}^{n} \rightarrow \mathbb{R}$, define $\Omega_{t,\eta}^{g} := \{x \in \mathbb{R}^n : g(t,x) \le \eta\}$, a $t$-dependent set. $KYP$ denotes a mapping to the block 2-by-2 matrix: $KYP(Y,A,B,C,D,M) := 
 \left[\begin{smallmatrix}A^\top Y + YA & YB \\ B^\top Y & 0\end{smallmatrix}\right] + \left[\begin{smallmatrix}\star \end{smallmatrix}\right]^\top M \left[\begin{smallmatrix}C & D\end{smallmatrix}\right]. $

\section{Backward Reachability with Hard IQCs}\label{sec:hardIQC}
\subsection{Problem Setup} \label{sec:prelim}
Consider the following uncertain nonlinear system:
\begin{subequations} \label{eq:uncertain_sys}
\begin{align}
\dot{x}_G(t) &= f(x_G(t),w(t),d(t))  + g(x_G(t),w(t),d(t))u(t), \\
v(t) &= h(x_G(t),w(t),d(t)), \label{eq:expression_v} \\
w(\cdot) &= \Delta(v(\cdot)), 
\end{align}
\end{subequations}
which is an interconnection (Fig.~\ref{fig:Fu}) of the nominal system $G$ and the perturbation $\Delta$, denoted as $F_u(G,\Delta)$. In \eqref{eq:uncertain_sys}, $x_G(t) \in \R^{n_G}$ is the state, $u(t) \in U \subseteq \R^{n_u}$ is the control input, $d(t) \in \R^{n_d}$ is the external disturbance, and $v(t) \in \R^{n_v}$ and $w(t) \in \R^{n_w}$ are the inputs and outputs of $\Delta$. The mappings $f: \R^{n_G} \times \R^{n_w} \times \R^{n_d} \rightarrow \R^{n_G}, g: \R^{n_G} \times \R^{n_w} \times \R^{n_d}\rightarrow \R^{n_G \times n_u}$, and $h: \R^{n_G} \times \R^{n_w} \times \R^{n_d}\rightarrow \R^{n_v}$ define the nominal system $G$. The perturbation $\Delta: \mathcal{L}_{2}^{n_v} \rightarrow \mathcal{L}_{2}^{n_w}$ is an operator. Note that in \eqref{eq:expression_v}, $v$ does not depend directly on $u$.
\begin{figure}[h]
	\centering
	\includegraphics[width=0.2\textwidth]{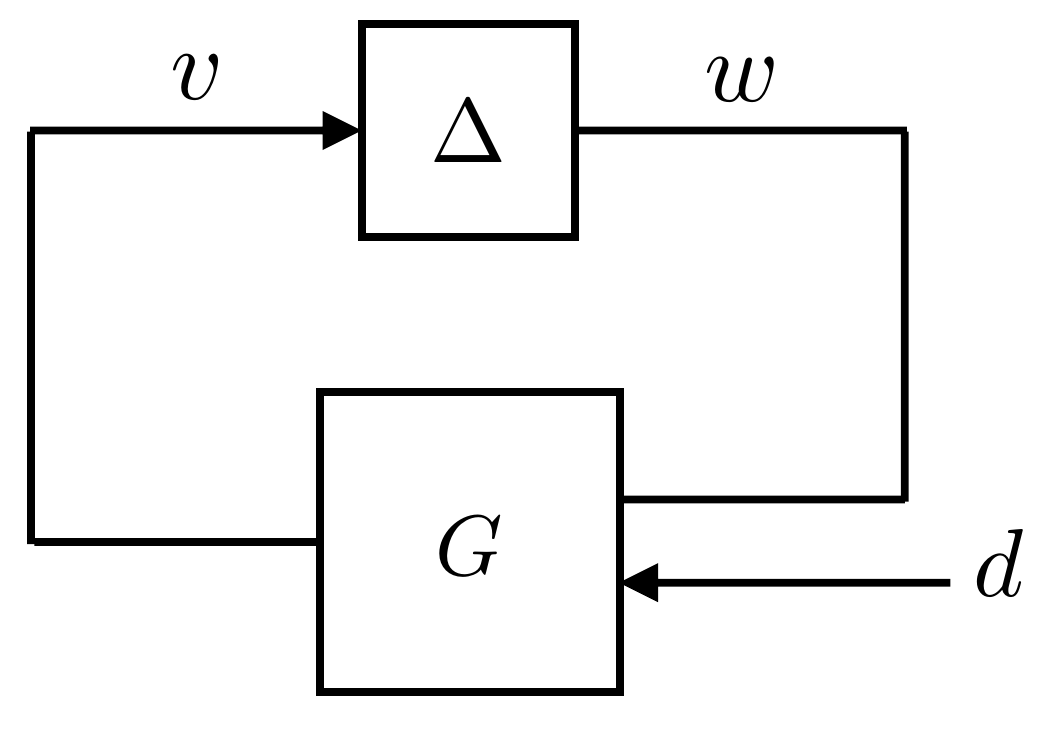}
	\caption{Interconnection $F_u(G,\Delta)$ of $G$ and $\Delta$} 
	\label{fig:Fu}    
\end{figure}
\begin{assumption} \label{ass:disturb}
(i) $d$ has bounded $\mathcal{L}_2$ energy: 
\begin{align}
\norm{d}_{2,[0,T]} < R, \ \text{with} \ R >0, \ \text{and}\label{eq:w_ass}
\end{align}
(ii) the set of control constraints is given as a polytope $U:= \{u \in \R^{n_u}: P u \leq b \}$, where $P \in \R^{n_p \times  n_u}$ and $b \in \R^{n_p}$.
\end{assumption}

Let $x_G(t;\xi,u,d)$ define the solution to the uncertain system \eqref{eq:uncertain_sys}, at time $t$ $(0 \leq t \leq T)$, from the initial condition $\xi$, under the control $u$ and the disturbance $d$. The definition of the backward reachable set (BRS) is given as follows.
\begin{definition}
	Under Assumption~\ref{ass:disturb}, the BRS of $F_u(G,\Delta)$ \eqref{eq:uncertain_sys} is defined as $BRS(T,X_T,U,R,F_u(G,\Delta)):=$
	\begin{align}
	& \{\xi \in \R^{n_G}: \exists u, \ \text{s.t.} \ u(t) \in U \ \forall t \in [0,T], \ \text{and}  \nonumber \\
	&~~~~~~~~~~~~~~ x_G(T;\xi,u,d) \in X_T \  \forall d \ \text{with} \ \norm{d}_{2,[0,T]} < R\}. \nonumber
	\end{align}
\end{definition}

The goal of this paper is to compute an inner-approximation to the BRS and an associated controller that certifies the inner-approximation.

\subsection{Integral Quadratic Constraints}
The perturbation $\Delta$ can represent various types of uncertainties and nonlinearities, including parametric uncertainty, unmodeled dynamics, slope-bounded nonlinearities, and uncertain time delays \cite{Megretski:97, Veenman:16}. To characterize $\Delta$ with an integral quadratic constraint (IQC) we apply a `virtual' filter $\Psi$ to the input $v$ and output $w$ of $\Delta$, as illustrated in Fig.~\ref{fig:filter}, and impose quadratic constraints on the output $z$ of $\Psi$. 
\begin{figure}[h]
	\centering
	\includegraphics[width=0.3\textwidth]{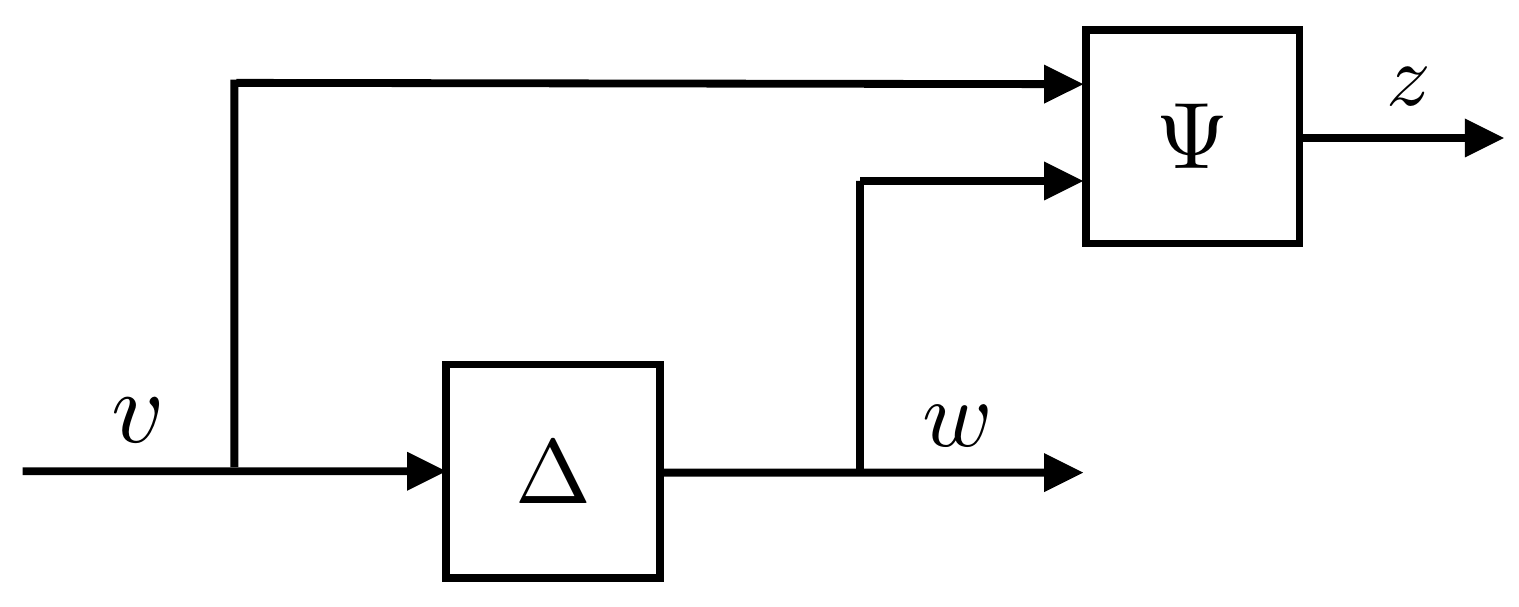}
	\caption{Pictorial illustration of the filter $\Psi$}
	\label{fig:filter}    
\end{figure}
The filter $\Psi$ is an LTI system driven by $(v,w)$, with zero initial condition $x_\psi(0) = 0$, and dynamics of the form: 
\begin{subequations}
	\begin{align}
	\dot{x}_{\psi}(t) &= A_\psi x_\psi(t) + B_{\psi 1}v(t) + B_{\psi 2}w(t), \\
	z(t) &= C_\psi x_\psi(t) + D_{\psi 1}v(t) + D_{\psi 2}w(t), \label{eq:filter_output}
	\end{align}
\end{subequations}
where $x_\psi(t) \in \R^{n_\psi}$ is the state, and $z(t) \in \R^{n_z}$ is the output. IQCs can be defined in both the time and frequency domain. The use of time domain IQCs is required for the dissipation-type results used later in this paper. Time domain IQCs consist of hard IQCs and soft IQCs, which are quadratic constraints on $z$ over finite and infinite horizons, respectively. In this section, we focus on the analysis with hard IQCs. 
\begin{definition}
	Given $\Psi \in \R\mathbb{H}_{\infty}^{n_z \times (n_v + n_w)}$ and $M \in \mathbb{S}^{n_z}$. A bounded, causal operator $\Delta : \mathcal{L}_{2}^{n_v} \rightarrow \mathcal{L}_{2}^{n_w}$ satisfies the \underline{hard IQC} defined by $(\Psi,M)$ if, for all $v \in \mathcal{L}_{2}^{n_v}$, and $w = \Delta(v)$,
	\begin{align}
	\int_0^t z(\tau)^\top M z(\tau) d\tau \ge 0, \forall t \in [0,T].
	\end{align}
\end{definition}
 
The notation $\Delta \in$ HardIQC$(\Psi,M)$ indicates that $\Delta$ satisfies the hard IQC defined by ($\Psi$, $M$). The following example gives two types of $\Delta$ and corresponding hard IQCs:
\begin{example}\label{ex:hardIQC}
(a) Consider the set of LTI uncertainties with a given norm bound $\sigma > 0$: $\Delta \in \R\mathbb{H}_\infty$ with $\norm{\Delta}_\infty \leq \sigma$. It is proved in \cite{Balakrishnan:02} that $\Delta \in$ HardIQC$(\Psi, M_D)$, where $\Psi = \left[\begin{smallmatrix}\Psi_{11} & 0\\ 0 & \Psi_{11}\end{smallmatrix}\right]$ with $\Psi_{11} \in \R\mathbb{H}_\infty^{n_z \times 1}$ and 
\begin{align}
M_D \in \mathcal{M}_1 := \left\{\left[\begin{smallmatrix}\sigma^2 M_{11} & 0\\ 0 & - M_{11}\end{smallmatrix}\right]: M_{11} \succeq 0 \right\}. 
\end{align}
A typical choice for $\Psi_{11}$ \cite{Veenman:16} is 
\begin{align}\label{eq:filter_choice}
\Psi_{11}^{d,m} = \bmat{ 1, \frac{1}{(s+m)},...,\frac{1}{(s+m)^d} }^\top, \ \text{with} \ m > 0,
\end{align}
where $m$ and $d$ are selected by the user.

(b) Consider the set of nonlinear, time varying, uncertainties with a given norm-bound $\sigma$: $\norm{\Delta}_{2 \rightarrow 2, [0,T]} \leq \sigma$. $\Delta$ satisfies the hard IQCs defined by $\Psi = I_{n_v + n_w}$ and 
\begin{align}\label{eq:M_nonlinearity}
M \in \mathcal{M}_2 := \left\{ \left[\begin{smallmatrix} \sigma^2 \lambda I_{n_v} & 0 \\ 0 & -\lambda I_{n_w} \end{smallmatrix}\right]: \lambda \ge 0\right\}.
\end{align}
\end{example}

\subsection{Robust Backward Reachability}
As illustrated in the previous examples, each type of $\Delta$ can be characterized by corresponding hard IQCs associated with a filter $\Psi$ and a matrix $M$. The analysis on $F_u(G,\Delta)$ can be instead performed on the extended system shown in Fig.~\ref{fig:ExtendSys}, with an additional constraint $\Delta \in$ HardIQC$(\Psi,M)$. The extended system is an interconnection of $G$ and $\Psi$, with combined state vector $x: = [x_G; x_\psi]\in \R^{n}$, $n = n_G + n_\psi$, whose dynamics can be rewritten as 
\begin{subequations}\label{eq:extend_sys}
\begin{align}
\dot{x}(t) &= F(x(t), w(t), d(t), u(t)),\label{eq:extended_F} \\
z(t) &= H(x(t), w(t), d(t)), \label{eq:z_expression}
\end{align}
\end{subequations}
where $F: \R^{n} \times \R^{n_w} \times \R^{n_d} \times \R^{n_u} \rightarrow \R^{n}$ and $H: \R^{n} \times \R^{n_w} \times \R^{n_d} \rightarrow \R^{n_z}$ depend on the dynamics of $G$ and $\Psi$. $F$ is still affine in $u$.
\begin{figure}[h]
	\centering
	\includegraphics[width=0.25\textwidth]{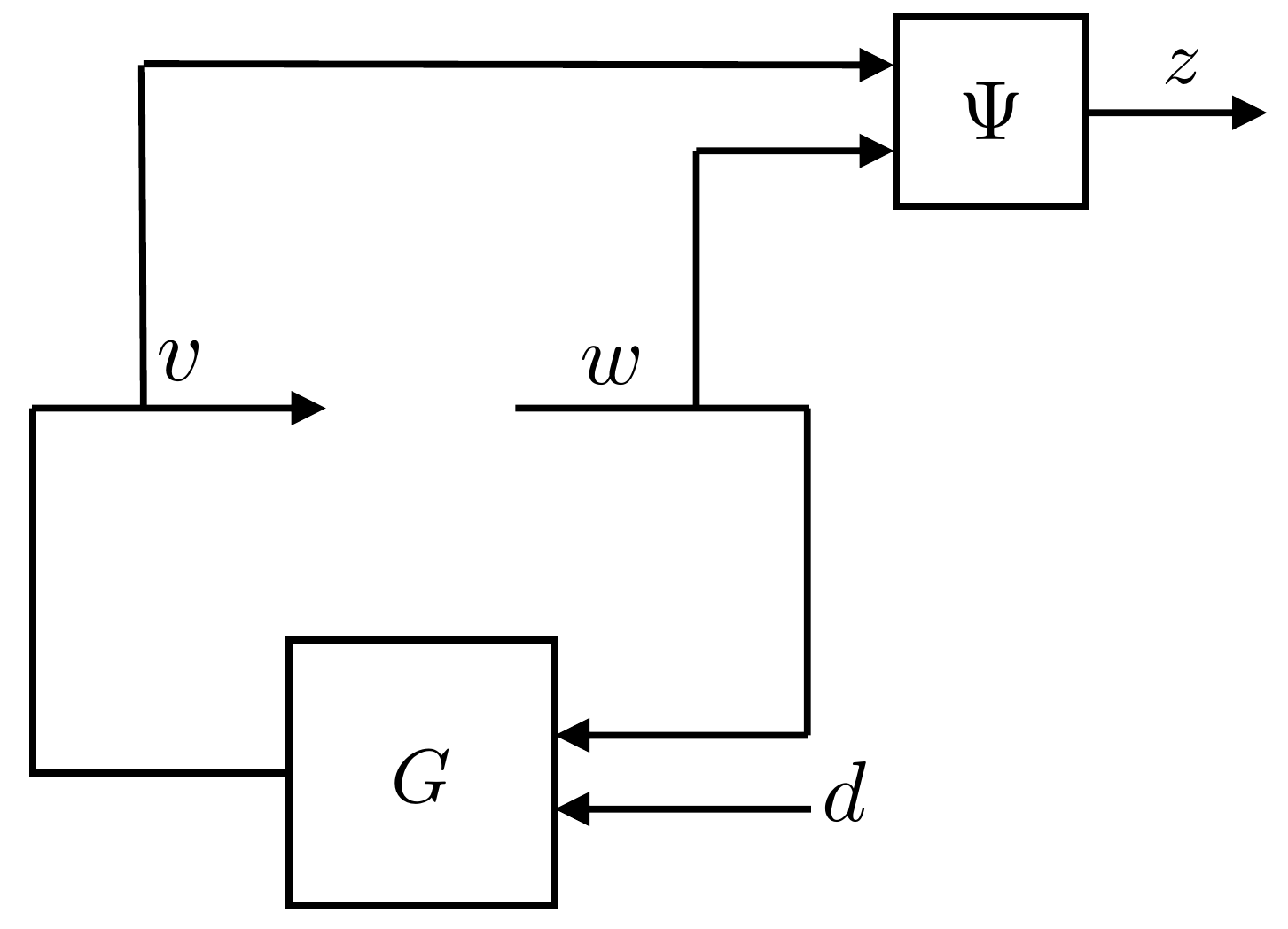}
	\caption{Extended system of $G$ and $\Psi$}
	\label{fig:ExtendSys}    
\end{figure}

We consider the memoryless, time-varying state-feedback control $u(t) = k(t,x_G(t))$, $k: \R \times \R^{n_G} \rightarrow \R^{n_u}$. We don't allow $k$ to depend on $x_\psi$, since $x_\psi$ is introduced by the virtual filter $\Psi$.  The following theorem provides a BRS inner-approximation for the extended system $G$ and $\Psi$, and therefore for the original uncertain system $F_u(G,\Delta)$, with control $k$.
\begin{theorem}\label{thm:hardIQC}
Let Assumption~\ref{ass:disturb} hold, and assume $\Delta \in$ HardIQC$(\Psi, M)$, with $\Psi$ and $M$ given. Given $X_T \subset \R^{n_G}$, $P \in \R^{n_p \times  n_u}$, $b \in \R^{n_p}$, $R > 0$, $F, H$ defined in \eqref{eq:extend_sys}, $T > 0$, and $\gamma \in \R$, if there exists a $\mathcal{C}^1$ function $V: \R \times \R^{n} \rightarrow \R$, and a control law $k: \R \times \R^{n_G} \rightarrow \R^{n_u}$ that is continuous in $t$ and locally Lipschitz in $x_G$, such that
\begingroup
\allowdisplaybreaks
\begin{align}
&\partial_t V(t,x) + \partial_{x} V(t,x)\cdot F(x, w, d, k) + z^\top M z \leq d^\top d, \nonumber \\
& \quad \quad  \forall (t,x,w,d) \in  [0,T]\times \R^{n} \times \R^{n_w} \times \R^{n_d}, \nonumber  \\
& \quad \quad \text{s.t.} \ V(t, x) \leq \gamma + R^2, \label{eq:diss_basic} \\
& \{x_G : V(T,x)\leq \gamma + R^2 \} \subseteq X_T, \ \forall x_\psi \in \R^{n_\psi}, \label{eq:target_contain} \\
& P_i k(t,x_G) \leq b_i, \forall (t,x) \in [0,T]\times \R^n, \nonumber \\
& \quad \quad  \text{s.t.} \ V(t,x) \leq \gamma + R^2, i = 1,...,n_p, \label{eq:control_saturate_thm1}
\end{align}
\endgroup
then the intersection of $\Omega^V_{0, \gamma}$ with the hyperplane $x_\psi = 0$ is an inner-approximation to $BRS(T,X_T,U,R,F_u(G,\Delta))$ under the control law $k$.
\end{theorem}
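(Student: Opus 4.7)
The plan is to show that for any initial condition $\xi$ with $(\xi,0)\in\Omega^V_{0,\gamma}$, any admissible disturbance $d$ with $\|d\|_{2,[0,T]}<R$, and any $\Delta\in$ HardIQC$(\Psi,M)$, the extended-system trajectory under control $k$ stays in the set $\{x:V(t,x)\le\gamma+R^2\}$ for all $t\in[0,T]$, satisfies the pointwise control constraint, and reaches $X_T$ at time $T$. The filter initialization $x_\psi(0)=0$ together with $V(0,\xi,0)\le\gamma$ gives $V(0,x(0))\le\gamma\le\gamma+R^2$, so the trajectory starts inside the conditioning set of \eqref{eq:diss_basic}.

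The main technical step is to establish forward invariance of $\{V\le\gamma+R^2\}$ on $[0,T]$, since \eqref{eq:diss_basic} is only guaranteed to hold where $V\le\gamma+R^2$. I would define
\[
\tau^* := \sup\{\tau\in[0,T]: V(s,x(s))\le\gamma+R^2 \text{ for all } s\in[0,\tau]\},
\]
and argue by contradiction that $\tau^*=T$. For any $\tau\le\tau^*$, evaluating \eqref{eq:diss_basic} along the closed-loop trajectory (with $u=k(t,x_G)$) and integrating in $t$ yields
\[
V(\tau,x(\tau)) - V(0,x(0)) + \int_0^\tau z(s)^\top M z(s)\,ds \;\le\; \int_0^\tau d(s)^\top d(s)\,ds.
\]
Using the hard IQC to drop the $z^\top Mz$ integral (it is nonnegative on $[0,T]$), the assumption $\|d\|_{2,[0,T]}<R$ to bound the right-hand side strictly below $R^2$, and $V(0,x(0))\le\gamma$, I obtain $V(\tau,x(\tau))<\gamma+R^2$ for all $\tau\le\tau^*$. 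If $\tau^*<T$, continuity of $V(\cdot,x(\cdot))$ forces $V(\tau^*,x(\tau^*))=\gamma+R^2$, contradicting the strict bound just derived; hence $\tau^*=T$.

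Given invariance, the remaining claims are immediate. For every $t\in[0,T]$ we have $V(t,x(t))\le\gamma+R^2$, so \eqref{eq:control_saturate_thm1} gives $P_i k(t,x_G(t))\le b_i$ for all rows $i$, i.e.\ $u(t)\in U$. At $t=T$ the same bound and \eqref{eq:target_contain} give $x_G(T;\xi,u,d)\in X_T$ regardless of the value of $x_\psi(T)$. Since this holds for every admissible $d$ and every $\Delta\in$ HardIQC$(\Psi,M)$, the point $\xi$ belongs to $BRS(T,X_T,U,R,F_u(G,\Delta))$, which is the desired conclusion.

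The only real obstacle is the conditional nature of the dissipation inequality; the strictness of the disturbance bound $\|d\|_{2,[0,T]}<R$ is what makes the continuity/contradiction argument close, and the hard (rather than soft) IQC is what lets the $z^\top Mz$ term be discarded on every finite subinterval $[0,\tau]$ — this is precisely why the soft-IQC extension in Section~\ref{sec:softIQC} will require additional machinery.
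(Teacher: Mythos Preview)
Your proposal is correct and follows essentially the same route as the paper's proof: an invariance-by-contradiction argument using the first exit time from $\{V\le\gamma+R^2\}$, integration of the local dissipation inequality up to that time, the hard IQC to discard the $\int z^\top Mz$ term, and the strict bound $\|d\|_{2,[0,T]}<R$ to produce the contradiction. The only cosmetic difference is that the paper formalizes the exit time as $T_2=\inf\{t:V(t,x(t))>\gamma+R^2\}$ while you use the dual $\tau^*=\sup\{\tau:V\le\gamma+R^2\text{ on }[0,\tau]\}$; your version is, if anything, slightly more complete, since you also explicitly verify the input constraint via \eqref{eq:control_saturate_thm1}, which the paper's proof leaves implicit.
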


\begin{proof} 
	Since the dissipation inequality \eqref{eq:diss_basic} only holds on the local region $\Omega_{t,\gamma + R^2}^V$, we first need to prove that all the state trajectories starting from $\Omega_{0,\gamma}^V$ won't leave $\Omega_{t,\gamma + R^2}^V$ for all $t \in [0,T]$. This is proved by contradiction. Assume there exists a time instance $T_1 \in [0, T]$, $x_0 \in \Omega_{0,\gamma}^V$, such that a trajectory starting from $x(0) = x_0$ satisfies $V(T_1,x(T_1)) > \gamma + R^2$. Define $T_2 = \inf_{V(t,x(t))> \gamma+R^2} t$, and integrate \eqref{eq:diss_basic} over $[0, T_2]$:
	\begin{align}
	V(T_2, x(T_2)) - V(0,x(0))& \nonumber \\
	 + \int_0^{T_2} z(t)^\top M z(t) dt &\leq  \int_0^{T_2} d(t)^\top d(t) dt. \nonumber \\
	 \intertext{Apply $x(0) = x_0 \in \Omega_{0,\gamma}^V$ and $\Delta \in$ HardIQC$(\Psi,M)$ to show}
	V(T_2,x(T_2)) &\leq \gamma + \int_0^{T_2} d(t)^\top d(t) dt. \label{eq:proof1_a}
	\intertext{Next recall that $d$ is assumed to satisfy \eqref{eq:w_ass}:}
	\gamma + R^2 = V(T_2,x(T_2)) &< \gamma + R^2, \label{eq:proof1_b}
	\end{align}
which is a contradiction. As a result, $x(0) \in \Omega_{0,\gamma}^V$ implies $x(t) \in \Omega_{t,\gamma+R^2}^V$ for all $t \in [0,T]$, and thus $V(T,x(T))\leq \gamma+R^2$. Combining it with \eqref{eq:target_contain} shows that $\Omega_{0,\gamma}^V$ is an inner-approximation to the BRS of the extended system, and the intersection of  $\Omega_{0,\gamma}^V$ with $x_\psi = 0$ is an inner-approximation to $BRS(T,X_T,U,R,F_u(G,\Delta))$.
\end{proof}

To find a storage function $V$ and a control law $k$ satisfying the conditions of Theorem~\ref{thm:hardIQC}, we make use of sum-of-squares (SOS) programming. To do so, we restrict the decision variables to polynomials $V \in \R[(t,x)]$, $k \in \R^{n_u}[(t,x_G)]$, and make the following assumption.
\begin{assumption}
The nominal system $G$ given in \eqref{eq:uncertain_sys} has polynomial dynamics: $f \in \R^{n_G}[(x_G,w,d)]$, $g \in \R^{n_G \times n_u}[(x_G,w,d)]$, and $h \in \R^{n_v}[(x_G,w,d)]$. Therefore, $F$ and $H$ in \eqref{eq:extend_sys} are polynomials. $X_T$ is a semi-algebraic set: $X_T := \{x_G: p_x(x_G) \leq 0\}$, where $p_x \in \R[x_G]$ is provided.
\end{assumption}

In Example~\ref{ex:hardIQC}, we have seen that for each type of perturbation, any IQC defined by a properly chosen $\Psi$ and a $M$ drawn from the constraint set $\mathcal{M}$ is valid. Therefore, along with $V$ and $k$, we also treat $M \in \mathcal{M}$ as a decision variable. Assume $\mathcal{M}$ is described by linear matrix inequalities. Define $p_t : = t(T-t)$, which is nonnegative for all $t \in [0,T]$. By applying the generalized S-procedure \cite{Parrilo:00} to \eqref{eq:diss_basic} -- \eqref{eq:control_saturate_thm1}, and choosing the volume of $\Omega_{0,\gamma}^V$ as the objective function (to be maximized), we obtain the following optimization problem:
\begingroup
\allowdisplaybreaks
\begin{subequations}\label{eq:sos_hard}
\begin{align}
	\sup_{V,M,k,s_i} &\text{Volume}(\Omega_{0,\gamma}^V) \nonumber \\
	\text{s.t.} \ \ & V \in \R[(t,x)], k \in \R^{n_u}[(t,x_G)], M \in \mathcal{M}, \nonumber \\
	& -(\partial_t V + \partial_x V \cdot F \vert_{u=k} + z^\top M z - d^\top d) - s_1 p_t \nonumber \\
	& \quad \quad + (V - \gamma - R^2)s_2 \in \Sigma[(t,x,w,d)],\\
	& - s_3 p_x + V\vert_{t=T} - \gamma - R^2 \in \Sigma[x],\\
	&-(P_i k - b_i) - s_{4,i} p_t + (V - \gamma - R^2)s_{5,i} \nonumber \\
	&\quad \quad   \in \Sigma[(t,x)], \forall i = 1,...,n_p,
\end{align}
\end{subequations}
\endgroup
where polynomials decision variables $s_1, s_2 \in \Sigma[(t,x,w,d)]$, $(s_3-\epsilon) \in \Sigma[x]$, and $s_{4,i}, s_{5,i} \in \Sigma[(t,x)]$ are called S-procedure certificates or multipliers. The positive number $\epsilon$ ensures that $s_3$ is uniformly bounded away from 0. The optimization \eqref{eq:sos_hard} is a nonconvex SOS problem, since it is bilinear in two sets of decision variables, $V$ and $(k,s_2,s_{5,i})$.  Similar to \cite{Yin:TAC}, this noncovex optimization can be handled by alternating the search over these two sets of decision variables, since holding one set fixed and optimizing over the other results in a convex problem. The algorithm for solving \eqref{eq:sos_hard} is summarized in Algorithm~\ref{alg:alg1}, the $\gamma$-step of which treats $\gamma$ as a decision variable. By maximizing the value of $\gamma$, the volume of $\Omega_{0,\gamma}^{V^{j-1}}$ can be enlarged. The constraint \eqref{eq:levelset_grow} in the $V$-step enforces $\Omega_{0,\gamma^j}^{V^{j-1}} \subseteq \Omega_{0,\gamma^j}^{V^j}$. As proven in \cite{Yin:TAC}, the inner-approximation certified in one iteration contains the one certified in the previous iteration. A linear state feedback
for the linearization about the equilibrium point was
used to compute the initial iterate, $V^0$ \cite{Topcu:09}.
\begin{algorithm} [H]
	\caption{Iterative method for hard IQCs}
	\label{alg:alg1}
	\begin{algorithmic}[1]
		\Require{function $V^0$ such that constraints \eqref{eq:sos_hard} are feasible by proper choice of $s_i, k, \gamma,M$.}
		\Ensure{($k$, $\gamma$, $V$, $M$) such that with the volume of $\Omega_{0,\gamma}^V$ having been enlarged.}
		\For{$j = 1:N_{iter}$}
		\State $\boldsymbol{\gamma}$\textbf{-step}: decision variables $(s_i, k,\gamma,M)$.
			
			Maximize $\gamma$ subject to \eqref{eq:sos_hard}  using $V = V^{j-1}$. 
			
			This yields ($s_2^j, s_{5,i}^j, k^j$) and optimal reward $\gamma^j$.
		\State $\boldsymbol{V}\textbf{-step}$: decision variables $(s_0, s_1, s_3, s_{4,i}, V,M)$; 
			
			Maximize the feasibility subject to \eqref{eq:sos_hard} as well as 
			
			$s_0 \in \Sigma[x]$, and
			\begin{align}
			\ \ \ (\gamma^j - V\vert_{t=0}) +  (V^{j-1}\vert_{t=0} - \gamma^j) s_0 \in \Sigma[x], \label{eq:levelset_grow}
			\end{align}

			using ($\gamma = \gamma^j, s_2 = s_2^j, s_{5,i} = s_{5,i}^j, k=k^j$). This 
			
			yields $V^j$.
		\EndFor
	\end{algorithmic}
\end{algorithm}

\section{Extention to Actuator Uncertainty} \label{sec:actuator_uncert}
This section considers the case where the control inputs are subject to actuator uncertainty. In particular, consider the case where the input commanded by the controller is $u$ but the actual effect on the plant dynamics is the perturbed input $u_{pert}$.   For example, unmodeled actuator dynamics can be modeled as follows where $\Delta$ is a norm-bounded nonlinearity:
\begin{align}  
  u_{pert} = u + \Delta(u).
\end{align}
The input $v$ to $\Delta$ and the IQC filter output $z$ were previously defined (Equations \eqref{eq:expression_v} and \eqref{eq:z_expression}) to be independent of the control command $u$. However, the inclusion of the actuator uncertainty implies that $v$ and $z$ must now depend on $u$. 

This motivates the following generalization of the proposed method.  Assume the entire input vector $u$ is subject to the actuator uncertainty.  The perturbation input and IQC filter output are now given by the following modifications to Equations  \eqref{eq:expression_v} and \eqref{eq:z_expression}:
\begin{align}
   v(t) & = h(x_G(t),w(t),d(t),u(t)), \\
   z(t) & = H(x(t),w(t),d(t),u(t)). \label{eq:new_H}
\end{align}
A consequence of this generalization is that optimization over $k$ is nonconvex even when $V$ is fixed, since $z^\top M z$ in \eqref{eq:diss_basic} depends nonlinearly on $k$. A remedy is to introduce auxiliary state $\tilde x \in \R^{n_{u}}$ for the perturbed control input $u$, and to design a dynamic controller of the form
\begin{subequations}\label{eq:control_sys}
	\begin{align}
	\dot{\tilde x}(t) &= \tilde k(t,x_G(t),\tilde x(t)), \\
	u(t) &= \tilde x(t).
	\end{align}
\end{subequations}
where $ \tilde k : \R \times \R^{n_G} \times \R^{n_{u}} \rightarrow \R^{n_{u}}$ is to be determined. If we restrict the initial condition of $\tilde x$ to be zero: $\tilde x(0) = 0^{n_u}$, allow $\tilde k$ to depend on $\tilde x$, but not on $x_\psi$, and $V$ to depend on the new state $\tilde x$: $V: \R \times \R^{n}\times \R^{n_{u}} \rightarrow \R$, then the dissipation inequality becomes:
\begin{align}
&\partial_t V(t,x, \tilde x) + \partial_x V(t,x, \tilde x) \cdot F(x, w,d, \tilde x)  \nonumber \\
&\quad \quad + \partial_{\tilde x} V(t,x, \tilde x) \cdot \tilde k(t,x_G,\tilde x) +z^\top M z \leq d^\top d,  \nonumber \\
& \quad \quad \ \forall (t, x,\tilde x, w,d) \in  [0,T]\times \R^{n} \times \R^{ n_u} \times \R^{n_w} \times \R^{n_d}, \nonumber \\
&\quad \quad \ \text{s.t.} \ V(t,x, \tilde x) \leq \gamma + R^2. \label{eq:dissi_input_perturb}
\end{align}
The term $z^\top M z$ in \eqref{eq:dissi_input_perturb} is then nonlinear in the state variable $\tilde x$, rather than in the control law. The dissipation inequality is therefore bilinear in $V$ and $\tilde k$, and can be solved in a way similar to Algorithm~\ref{alg:alg1}. Next, we provide the theorem that incorporates actuator uncertainties. 
\begin{theorem} \label{thm:act_uncer}
	Let Assumption~\ref{ass:disturb} hold, and assume $\Delta \in$ HardIQC$(\Psi, M)$, with $\Psi$ and $M$ given. Given $X_T \subset \R^{n_G}$, $P \in \R^{n_p \times  n_u}$, $b \in \R^{n_p}$, $R > 0$, $F$ defined in \eqref{eq:extended_F}, $H$ defined in \eqref{eq:new_H}, $T > 0$, and $\gamma \in \R$, if there exists a $\mathcal{C}^1$ function $V: \R \times \R^{n}\times \R^{n_{u}} \rightarrow \R$, and control law $\tilde k : \R \times \R^{n_G} \times \R^{ n_{u}} \rightarrow \R^{n_{u}}$, such that \eqref{eq:dissi_input_perturb}, 
	\begin{align}
	&\{x_G : V(T,x,\tilde{x})\leq \gamma +R^2 \} \subseteq X_T, \nonumber \\
	& \quad \quad \quad \quad  \forall (x_\psi, \tilde x) \in \R^{n_\psi} \times U,  \\
	& P_i \tilde x \leq b_i, \forall (t,x,\tilde x) \in [0,T] \times \R^n \times \R^{n_u}, \nonumber \\
	&\quad \quad  \quad \quad \text{s.t.}  \ V(t,x,\tilde x) \leq \gamma + R^2, i = 1,...,n_p,
	\end{align}
	then the intersection of $\Omega_{0,\gamma}^V$ with the hyperplane $(x_\psi, \tilde x) = 0$ is an inner-approximation to $BRS(T,X_T,U,R,F_u(G,\Delta))$ under the control \eqref{eq:control_sys}.
\end{theorem}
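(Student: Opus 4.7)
The plan is to mirror the contradiction argument used in the proof of Theorem~\ref{thm:hardIQC}, but now in the augmented state space $(x,\tilde x)\in\R^n\times\R^{n_u}$ whose dynamics arise from interconnecting the extended plant \eqref{eq:extend_sys} (with the modified output \eqref{eq:new_H}) and the dynamic controller \eqref{eq:control_sys}. First I would interpret $V(t,x,\tilde x)$ as a storage function on this augmented state, and re-derive the trapping property: if $(x(0),\tilde x(0))\in\Omega_{0,\gamma}^V$ yet some trajectory escapes $\Omega_{t,\gamma+R^2}^V$ at a first escape time $T_2\in(0,T]$, integrating the augmented dissipation inequality \eqref{eq:dissi_input_perturb} from $0$ to $T_2$ yields
\begin{align*}
V(T_2,x(T_2),\tilde x(T_2)) - V(0,x(0),\tilde x(0)) + \int_0^{T_2} z^\top M z\, d\tau \leq \int_0^{T_2} d^\top d\, d\tau.
\end{align*}
The hard IQC $\Delta\in\mathrm{HardIQC}(\Psi,M)$ discards the $z^\top M z$ integral and the disturbance bound \eqref{eq:w_ass} gives $\int_0^{T_2}d^\top d\, d\tau<R^2$, so $V(T_2,x(T_2),\tilde x(T_2))<\gamma+R^2$, contradicting $V(T_2,\cdot)=\gamma+R^2$ at the escape time.

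Having established the trapping property, I would then combine it with the two new containment conditions to close out the argument. Since any trajectory from $\Omega_{0,\gamma}^V$ satisfies $V(T,x(T),\tilde x(T))\leq\gamma+R^2$, the target containment hypothesis (which is stated to hold for all $(x_\psi,\tilde x)\in\R^{n_\psi}\times U$) gives $x_G(T)\in X_T$. The control-constraint hypothesis, combined with $u(t)=\tilde x(t)$, gives $P_i u(t)\leq b_i$ for all $t\in[0,T]$, so $u(t)\in U$ as required by Assumption~\ref{ass:disturb}. Finally, because the auxiliary controller is initialized at $\tilde x(0)=0$ and the IQC filter at $x_\psi(0)=0$, the admissible plant initial conditions are precisely those $\xi\in\R^{n_G}$ for which $(\xi,0,0)\in\Omega_{0,\gamma}^V$, i.e.\ the slice of $\Omega_{0,\gamma}^V$ with the hyperplane $(x_\psi,\tilde x)=0$, which is the claimed inner-approximation to $BRS(T,X_T,U,R,F_u(G,\Delta))$.

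The only nontrivial subtlety, and the step I would handle with the most care, is the legitimacy of invoking the hard IQC for the signals generated by the closed-loop system. With actuator uncertainty the perturbation input $v$ depends on $u=\tilde x$ via the modified map $h(x_G,w,d,u)$, so $v$ is now generated causally by the joint dynamics of $G$, $\Psi$ and the dynamic controller. Nevertheless, $\tilde x$ is produced by a locally Lipschitz controller from the augmented state, so a standard well-posedness argument on $[0,T]$ shows $v\in\mathcal L_2^{n_v}[0,T]$, and the hard IQC definition applies to $(v,w=\Delta(v))$ over $[0,T_2]$ unchanged. Beyond this point the proof is a direct transcription of Theorem~\ref{thm:hardIQC} with the state dimension enlarged by $n_u$; the technical benefit of introducing $\tilde x$, as noted by the authors, is to shift the nonconvexity in $k$ out of \eqref{eq:diss_basic} and into the bilinear $V$--$\tilde k$ coupling in \eqref{eq:dissi_input_perturb}, which does not affect the proof of correctness but only the numerical computation.
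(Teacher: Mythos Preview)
Your proposal is correct and follows essentially the same approach as the paper: the paper does not give an explicit proof of Theorem~\ref{thm:act_uncer} but clearly intends it to be a direct adaptation of the proof of Theorem~\ref{thm:hardIQC} to the augmented state $(x,\tilde x)$, which is exactly what you do (contradiction argument for the trapping property via integration of \eqref{eq:dissi_input_perturb}, then target containment and control constraints). Your added remarks on the well-posedness of $v$ and the need for $\tilde x(T)\in U$ to invoke the target-containment hypothesis are sensible refinements that the paper leaves implicit.
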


The conditions of Theorem~\ref{thm:act_uncer} can be formulated as an SOS optimization similar to \eqref{eq:sos_hard}, and is omitted. The results go through, mainly with notation changes, when only a subset of the control inputs are perturbed by the uncertainty.

\section{Backward Reachability with soft IQCs}\label{sec:softIQC}
Previously we assumed $\Delta$ $\in$ HardIQC$(\Psi,M)$. However, many IQCs are specified in the frequency domain \cite{Megretski:97}, and an equivalent time domain representation results in a `soft IQC' as defined below.
\begin{definition}
Let $\Pi = \Pi^\sim \in \R\mathbb{L}_\infty^{(n_v + n_w)\times(n_v + n_w)}$ be given. A bounded, causal operator $\Delta : \mathcal{L}_{2}^{n_v} \rightarrow \mathcal{L}_{2}^{n_w}$ satisfies the \underline{frequency domain IQC} defined by the multiplier $\Pi$ if, for all $v \in \mathcal{L}_2^{n_v}$, and $w = \Delta(v)$,
\begin{align}
	\int_{-\infty}^\infty \bmat{\hat{v}(j \omega)\\\hat{w}(j\omega)}^* \Pi(j\omega) \bmat{\hat{v}(j \omega)\\\hat{w}(j\omega)} d\omega \ge 0, \nonumber 
\end{align}
where $\hat{v}$ and $\hat{w}$ are Fourier transforms of $v$ and $w$.
\end{definition}
Next, we provide the definition of the time domain soft IQC that is specified by $(\Psi, M)$.
\begin{definition}
Given $\Psi \in \R\mathbb{H}_{\infty}^{n_z \times (n_v + n_w)}$ and $M \in \mathbb{S}^{n_z}$. A bounded, causal operator $\Delta : \mathcal{L}_{2}^{n_v} \rightarrow \mathcal{L}_{2}^{n_w}$ satisfies the \underline{soft IQC} defined by $(\Psi,M)$ if, for all $v \in \mathcal{L}_{2}^{n_v}$, and $w = \Delta(v)$,
	\begin{align}
	\int_0^\infty z(\tau)^\top M z(\tau) d\tau \ge 0.
	\end{align} 
\end{definition}

Let $\Delta \in$ FreqIQC$(\Pi)$ and $\Delta \in$ SoftIQC$(\Psi, M)$ indicate that $\Delta$ satisfies corresponding frequency domain and time domain soft IQCs, respectively. Note that if $\Delta$ satisfies a time domain (hard or soft) IQC defined by $(\Psi, M)$, then $\Delta \in$ FreqIQC$(\Psi^\sim M \Psi)$. Conversely, any frequency domain multiplier $\Pi$ can be factorized (non-uniquely) as: $\Pi = \Psi^\sim M \Psi$ with $\Psi$ stable. By Parseval's theorem \cite{Kemin:1996}, $\Delta \in$ FreqIQC$(\Pi)$ implies $\Delta \in$ SoftIQC$(\Psi,M)$ for any such factorization. However, $\Delta \in$ FreqIQC$(\Pi)$ doesn't imply $\Delta \in$ HardIQC$(\Psi, M)$ in general. Hence, the library of IQCs specified in frequency domain can always be translated into soft IQCs, but not into hard IQCs. In addition, when both hard and soft factorizations exist, the latter is usually less restrictive. Therefore, it is helpful to incorporate soft IQCs in the analysis. Here, we provide one type of uncertainty and its corresponding frequency and time domain IQCs.
\begin{example}\label{ex:IQC_example_para}
	Consider the set of real constant parametric uncertainties: $w(t) = \Delta(v(t)) = \delta v(t)$, satisfying $\delta \leq \sigma$. From \cite{Megretski:97}, the frequency domain filter is chosen as $\Pi_\delta = \left[\begin{smallmatrix}\sigma^2\Pi_{11}(j\omega) & \Pi_{12}(j\omega) \\ \Pi_{12}^*(j\omega) & -\Pi_{11}(j\omega) \end{smallmatrix} \right]$, where $\Pi_{11}(j\omega)=\Pi_{11}^*(j\omega) \ge 0$ and $\Pi_{12}(j\omega)=-\Pi_{12}^*(j\omega)$ for all $\omega$. A soft IQC factorization for $\Pi_\delta$ is given by $\Psi= \left[\begin{smallmatrix} \Psi_{11}^{d,m} & 0 \\ 0 & \Psi_{11}^{d,m} \end{smallmatrix}\right]$, where $\Psi_{11}^{d,m}$ is defined in \eqref{eq:filter_choice}, and $M_{DG} = \left[\begin{smallmatrix} \sigma^2 M_{11} & M_{12} \\ M_{12}^\top & -M_{11} \end{smallmatrix} \right]$, where decision matrices are subject to $M_{11} = M_{11}^\top$, $M_{12} = -M_{12}^\top$, and $\Psi^{d,m\sim}_{11} M_{11}\Psi_{11}^{d,m} \ge 0$, which can be enforced by a KYP LMI \cite{RANTZER19967}. Notice that $\delta$ is a special case of the perturbation considered in Example \ref{ex:hardIQC} (a), and  thus $\delta \in$ HardIQC$(\Psi, M_D)$ as well. However, since $M_D$ is a special case of $M_{DG}$ with $M_{12}\equiv 0$, the analysis using $(\Psi, M_{DG})$ can be less conservative than using $(\Psi, M_D)$. 
\end{example}

Since soft IQCs hold over the infinite horizon, they cannot be incorporated in the analysis based on a finite-horizon dissipation inequality directly. To alleviate this issue, we use the following lemma which provides lower bounds for soft IQCs over all finite horizons, and thus allows for soft IQCs in the finite horizon reachability analysis. Let $\Pi = \left[\begin{smallmatrix}\Pi_{11} & \Pi_{12}\\\Pi_{12}^\sim & \Pi_{22} \end{smallmatrix}\right]$ be a partition conformal with the dimensions of $v$ and $w$.
\begin{lemma} \label{lemma:IQClowerbound}(\cite{Fetzer:18})
	Let $\Psi \in \R\mathbb{H}_\infty^{n_z \times (n_v + n_w)}$ and $M \in \mathbb{S}^{n_z}$ be given. Define $\Pi := \Psi^{\sim} M \Psi$. If $\Pi_{22}(j\omega) < 0 \ \forall \omega$, then 
	\begin{itemize}
		\item $D_{\psi 2}^\top M D_{\psi 2}<0$ and there exists a  $Y_{22} \in \mathbb{S}^{n_\psi}$ satisfying
		\begin{align}
		 KYP(Y_{22}, A_\psi,B_{\psi 2}, C_\psi, D_{\psi 2},M) < 0. \label{eq:exact_kyp}
		 \end{align}
		\item If $\Delta \in$ SoftIQC$(\Psi,M)$ then for all $t \ge 0$, $v \in \mathcal{L}_{2}^{n_v}[0,\infty]$, $w = \Delta(v)$, and $Y_{22} \in \mathbb{S}^{n_\psi}$ satisfying \eqref{eq:exact_kyp},
		\begin{align}
		\int_{0}^t z(\tau)^\top M z(\tau)d \tau \ge -x_\psi(t)^\top Y_{22}x_\psi(t).
		\end{align}
	\end{itemize}
\end{lemma}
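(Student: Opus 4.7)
The plan is to prove the two bullets separately. Part one is a direct application of the strict Kalman--Yakubovich--Popov (KYP) lemma to $\Pi_{22}$. Part two combines the soft IQC on $[0,\infty)$, applied to a truncated extension of $v$, with a dissipation inequality on $[t,\infty)$ derived from the KYP certificate $Y_{22}$.

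For the first bullet, observe that $\Pi_{22}(s) = \Psi_2^{\sim}(s) M \Psi_2(s)$, where $\Psi_2$ is the sub-transfer-matrix of $\Psi$ from $w$ to $z$ with state-space realization $(A_\psi, B_{\psi 2}, C_\psi, D_{\psi 2})$. The strict KYP lemma then states that $\Pi_{22}(j\omega) < 0$ for all $\omega \in \mathbb{R} \cup \{\infty\}$ is equivalent to the existence of $Y_{22} = Y_{22}^\top$ making $KYP(Y_{22},A_\psi,B_{\psi 2},C_\psi,D_{\psi 2},M)$ negative definite. Letting $\omega \to \infty$ in the frequency-domain inequality (or inspecting the lower-right block of the strict LMI) yields $D_{\psi 2}^\top M D_{\psi 2} < 0$.

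For the second bullet, fix $t \ge 0$ and define the truncation $\tilde v(\tau) = v(\tau)$ for $\tau \le t$ and $\tilde v(\tau) = 0$ for $\tau > t$; set $\tilde w = \Delta(\tilde v)$, and let $\tilde x_\psi, \tilde z$ denote the filter state and output driven by $(\tilde v, \tilde w)$ from zero initial condition. By causality of $\Delta$ and of $\Psi$, the triple $(\tilde w, \tilde x_\psi, \tilde z)$ coincides with $(w, x_\psi, z)$ on $[0,t]$. Applying the soft IQC to $(\tilde v, \tilde w)$ and splitting the integral at $t$ gives
\[
\int_0^t z^\top M z\, d\tau \;\ge\; -\int_t^\infty \tilde z^\top M \tilde z\, d\tau.
\]
On $[t,\infty)$ we have $\tilde v \equiv 0$, so left- and right-multiplying the KYP LMI by $[\tilde x_\psi^\top,\, \tilde w^\top]^\top$ yields the pointwise inequality
\[
\frac{d}{d\tau}\bigl(\tilde x_\psi^\top Y_{22} \tilde x_\psi\bigr) + \tilde z^\top M \tilde z \;\le\; 0.
\]
Integrating from $t$ to $T$, using $\tilde x_\psi(t) = x_\psi(t)$, and sending $T \to \infty$ then gives $\int_t^\infty \tilde z^\top M \tilde z\, d\tau \le x_\psi(t)^\top Y_{22} x_\psi(t)$, from which the claimed bound follows by combining with the soft-IQC splitting.

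The main technical delicacy is justifying $\tilde x_\psi(T)^\top Y_{22} \tilde x_\psi(T) \to 0$ as $T \to \infty$ so that the tail integral can be evaluated. Since $\tilde v \in \mathcal{L}_2$ and $\Delta$ is bounded on $\mathcal{L}_2$, $\tilde w \in \mathcal{L}_2$; because $\Psi \in \R\mathbb{H}_\infty$, the filter state driven by $\mathcal{L}_2$ inputs from zero initial condition is itself in $\mathcal{L}_2$ and uniformly continuous (its derivative is bounded), so Barbalat's lemma supplies $\tilde x_\psi(T) \to 0$. Everything else in the argument is bookkeeping.
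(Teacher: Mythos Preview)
The paper does not supply its own proof of this lemma; it is quoted from \cite{Fetzer:18} and used as a black box in the proof of Theorem~\ref{thm:softIQC}. Your argument is the standard one and is correct in substance: the first bullet is exactly the strict KYP lemma applied to the realization $(A_\psi,B_{\psi 2},C_\psi,D_{\psi 2})$ of $\Psi_2$, and the second bullet is the usual truncation trick combined with the dissipation inequality that the KYP certificate $Y_{22}$ provides on the tail $[t,\infty)$.

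The only imprecision is in your justification that $\tilde x_\psi(T)^\top Y_{22}\tilde x_\psi(T)\to 0$. You assert that $\dot{\tilde x}_\psi$ is bounded in order to invoke Barbalat, but the filter inputs $(\tilde v,\tilde w)$ are only known to lie in $\mathcal{L}_2$, not $\mathcal{L}_\infty$, so boundedness of the derivative is not immediate. The desired limit $\tilde x_\psi(T)\to 0$ is nonetheless true and follows more directly: since $\Psi\in\mathbb{RH}_\infty$, $A_\psi$ is Hurwitz, and the state of a Hurwitz LTI system driven from rest by an $\mathcal{L}_2$ input always converges to zero (split the variation-of-constants integral at $T/2$ and apply Cauchy--Schwarz to each half). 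With that replacement, your proof is complete.
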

Based on this lemma, the following theorem provides a BRS inner-approximation for $F_u(G,\Delta)$ with $\Delta \in$ SoftIQC$(\Psi, M)$, also allowing for actuator uncertainties.
\begin{theorem}\label{thm:softIQC}
	Let Assumption~\ref{ass:disturb} hold, and assume $\Delta \in$ SoftIQC$(\Psi, M)$, with $\Psi$ and $M$ given. Given $X_T \subset \R^{n_G}$, $P \in \R^{n_p \times  n_u}$, $b \in \R^{n_p}$, $R > 0$, $F$ defined in \eqref{eq:extended_F}, $H$ defined in \eqref{eq:new_H}, $T > 0$, and $\gamma \in \R$, if there exists a $\mathcal{C}^1$ function $V: \R \times \R^{n}\times \R^{n_{u}} \rightarrow \R$, a matrix $Y_{22} \in \mathbb{S}^{n_\psi}$ satisfying \eqref{eq:exact_kyp}, and control law $\tilde k : \R \times \R^{n_G} \times \R^{n_{u}} \rightarrow \R^{n_{u}}$, such that
	\begin{subequations}
	\begin{align}
	&\partial_t V(t,x, \tilde x) + \partial_x V(t,x, \tilde x) \cdot F(x, w,d, \tilde x)  \nonumber \\
	&\quad  + \partial_{\tilde x} V(t,x, \tilde x) \cdot \tilde k(t,x_G,\tilde x) +z^\top M z \leq d^\top d,  \nonumber \\
	& \quad  \ \forall (t, x,\tilde x, w,d) \in  [0,T]\times \R^{n} \times \R^{n_u} \times \R^{n_w} \times \R^{n_d}, \nonumber \\
	&\quad  \ \text{s.t.} \ \mathcal{V}(t,x,\tilde x) \leq \gamma + R^2,  \label{eq:dissi_input_perturb_softIQC}\\
	&\{x_G : \mathcal{V}(T,x,\tilde x) \leq \gamma +R^2 \} \subseteq X_T, \nonumber  \\
	& \quad  \forall (x_\psi, \tilde x) \in \R^{n_\psi} \times U, \label{eq:target_contain_soft} \\
	& P_i \tilde x \leq b_i, \forall (t,x,\tilde x) \in [0,T] \times \R^n \times \R^{n_u}, \nonumber \\
	&\quad \text{s.t.}  \ \mathcal{V}(t,x,\tilde x)  \leq \gamma + R^2, i = 1,...,n_p,
	\end{align}
	\end{subequations}
 where $\mathcal{V} = V - x_\psi^\top Y_{22} x_\psi$, then the intersection of $\Omega_{0,\gamma}^V$ with the hyperplane $(x_\psi, \tilde x) = 0$ is an inner-approximation to $BRS(T,X_T,U,R,F_u(G,\Delta))$  under the control \eqref{eq:control_sys}.
\end{theorem}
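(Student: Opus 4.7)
The plan is to follow the contradiction-based argument used in the proof of Theorem~\ref{thm:hardIQC}, with two modifications needed to accommodate (i) the dynamic controller and auxiliary state $\tilde x$ from Theorem~\ref{thm:act_uncer}, and (ii) the soft IQC, which only holds over the infinite horizon and therefore cannot be substituted directly into a finite-horizon integral as was done with the hard IQC. The bridge between these two difficulties is Lemma~\ref{lemma:IQClowerbound}, which converts the soft IQC into a pointwise-in-time lower bound on the filter integral, and motivates the shifted storage function $\mathcal{V}(t,x,\tilde x) := V(t,x,\tilde x) - x_\psi^\top Y_{22} x_\psi$.

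First, I would fix an initial state $x_G(0)$ in the claimed inner-approximation, i.e.\ with $x_\psi(0)=0$, $\tilde x(0)=0$, and $V(0,x(0),0)\le \gamma$, apply the dynamic controller \eqref{eq:control_sys}, and consider the resulting trajectory of the extended system together with the filter. Since $x_\psi(0)=0$, we have $\mathcal{V}(0,x(0),\tilde x(0)) = V(0,x(0),0)\le\gamma$. Suppose for contradiction that there exists $T_1\in[0,T]$ with $\mathcal{V}(T_1,x(T_1),\tilde x(T_1)) > \gamma + R^2$, and let $T_2$ be the first exit time, so that by continuity $\mathcal{V}(T_2,x(T_2),\tilde x(T_2)) = \gamma + R^2$ and the local dissipation inequality \eqref{eq:dissi_input_perturb_softIQC} applies on $[0,T_2]$.

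Next, I would integrate \eqref{eq:dissi_input_perturb_softIQC} along the trajectory over $[0,T_2]$, yielding
\begin{align*}
V(T_2,x(T_2),\tilde x(T_2)) - V(0,x(0),0) + \int_0^{T_2}\!z^\top M z\,d\tau \le \int_0^{T_2}\!d^\top d\,d\tau.
\end{align*}
To handle the filter integral I would invoke Lemma~\ref{lemma:IQClowerbound} (extending the trajectory's $v$ to an element of $\mathcal{L}_2^{n_v}[0,\infty]$ by truncation, which is the standard device that makes the soft IQC applicable) to conclude $\int_0^{T_2} z^\top M z\,d\tau \ge -x_\psi(T_2)^\top Y_{22} x_\psi(T_2)$. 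Substituting and rearranging using the definition of $\mathcal{V}$ gives
\begin{align*}
\mathcal{V}(T_2,x(T_2),\tilde x(T_2)) \le V(0,x(0),0) + \int_0^{T_2} d^\top d\,d\tau < \gamma + R^2,
\end{align*}
where the last inequality uses $V(0,x(0),0)\le\gamma$ and Assumption~\ref{ass:disturb}(i). This contradicts $\mathcal{V}(T_2,\cdot) = \gamma + R^2$, so trajectories cannot leave the sublevel set $\{\mathcal{V}\le\gamma+R^2\}$ on $[0,T]$.

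Having established this invariance, the proof wraps up in the same way as Theorems~\ref{thm:hardIQC} and~\ref{thm:act_uncer}: the control saturation constraint on $\tilde x$ guarantees $u(t)=\tilde x(t)\in U$ for all $t\in[0,T]$, and evaluating at $t=T$ together with \eqref{eq:target_contain_soft} gives $x_G(T)\in X_T$ regardless of $x_\psi(T)$ and $\tilde x(T)$, which is precisely the BRS membership criterion. The main obstacle I anticipate is the technical step of extending the signal $v$ to the infinite horizon so that Lemma~\ref{lemma:IQClowerbound} can be applied at the finite instant $T_2$; everything else is a direct adaptation of the hard-IQC argument, with $\mathcal{V}$ playing the role that $V$ played there so that the Lagrangian term $-x_\psi^\top Y_{22} x_\psi$ exactly absorbs the slack in the soft IQC.
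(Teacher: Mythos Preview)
Your proposal is correct and follows essentially the same approach as the paper: establish invariance of $\{\mathcal{V}\le\gamma+R^2\}$ by contradiction, then integrate the dissipation inequality and apply Lemma~\ref{lemma:IQClowerbound} to absorb the soft-IQC slack into the $-x_\psi^\top Y_{22}x_\psi$ term, and finally invoke \eqref{eq:target_contain_soft}. In fact you are more explicit than the paper, which abbreviates the invariance step to ``similar to the proof of Theorem~\ref{thm:hardIQC}'' without spelling out that Lemma~\ref{lemma:IQClowerbound} must already be invoked at the exit time $T_2$ (not just at $T$); your version makes that dependence clear.
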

\begin{proof}
	Similar to the proof of Theorem~\ref{thm:hardIQC}, it follows by contradiction that $(x(0),\tilde x(0)) \in \Omega_{0,\gamma}^V$ implies $(x(t),\tilde x(t)) \in \Omega_{t,\gamma+R^2}^{\mathcal{V}}$, for all $t \in [0, T]$. Therefore, we are able to integrate \eqref{eq:dissi_input_perturb_softIQC} over $[0,T]$:
	\begin{align}
	&V(T,x(T),\tilde x(T)) - V(0,x(0),\tilde x(0)) \nonumber \\
	& \quad \quad + \int_{0}^T z(t)^\top M z(t)dt \leq \int_{0}^T  d(t)^\top d(t)dt. \nonumber \\
	\intertext{Use $(x(0),\tilde x(0)) \in \Omega_{0,\gamma}^V$ and $\norm{d}_{2,[0,T]} < R$ to show}
	&V(T,x(T),\tilde x(T)) + \int_{0}^T z(t)^\top M z(t)dt < \gamma + R^2. \nonumber
	\intertext{Next it follows from $\Delta \in$ SoftIQC$(\Psi,M)$ and Lemma~\ref{lemma:IQClowerbound} that}
	& V(T,x(T),\tilde x(T)) - x_\psi(T)^\top Y_{22} x_\psi(T)  < \gamma + R^2 \label{eq:softpf2}.
	\end{align}
	Combining \eqref{eq:softpf2} with \eqref{eq:target_contain_soft}, it holds $x_G(T) \in X_T$ for all $(x(0),\tilde x(0)) \in \Omega_{0,\gamma}^V$. Therefore, the intersection of $\Omega_{0,\gamma}^V$ with $(x_\psi, \tilde x) = 0$ is an inner-approximation to $BRS(T,X_T,U,R,F_u(G,\Delta))$.
\end{proof}

 Similar to \eqref{eq:sos_hard}, we can formulate SOS optimization using the constraints of Theorem~\eqref{thm:softIQC}
\begingroup
\allowdisplaybreaks
\begin{subequations}\label{eq:sos_soft}:
	\begin{align}
	\sup_{V,M,Y_{22}, \tilde k,s_i} &\text{Volume}(\Omega_{0,\gamma}^V) \nonumber \\
	\text{s.t.} \ \ \ & V \in \R[(t,x,\tilde x)], \tilde k \in \R^{ n_u}[(t,x_G,\tilde x)], \nonumber \\
	&M \in \mathcal{M} \ \text{and} \ Y_{22} \in \mathbb{S}^{n_\psi} \ \text{satisfy \eqref{eq:exact_kyp}}, \nonumber \\
	& -(\partial_t V + \partial_x V \cdot F \vert_{u= \tilde x} + \partial_{\tilde x} V \cdot \tilde k \nonumber \\
	& \quad + z^\top M z - d^\top d)  + (\mathcal{V} - \gamma - R^2)s_2   \nonumber \\
	& \quad - s_1 p_t \in \Sigma[(t,x,\tilde x,w,d)],\\
	& - s_3 p_x + \mathcal{V}\vert_{t=T} - \gamma - R^2 \nonumber \\
	&\quad  + \Sigma_{i=1}^{n_p} (P_i \tilde x - b_i)s_{6,i} \in \Sigma[(x,\tilde x)],\\
	&-(P_i \tilde x - b_i) + (\mathcal{V} - \gamma - R^2)s_{5,i}  \nonumber \\
	&\quad  - s_{4,i} p_t  \in \Sigma[(t,x,\tilde x)],  i = 1,...,n_p, 
	\end{align}
\end{subequations}
\endgroup
where $s_1, s_2 \in \Sigma[(t,x,\tilde x, w,d)]$, $(s_3-\epsilon), s_{6,i} \in \Sigma[(x,\tilde x)]$ and $s_{4,i}, s_{5,i} \in \Sigma[(t,x,\tilde x)]$. The optimization \eqref{eq:sos_soft} is bilinear in $(V, Y_{22})$ and $(s_2, s_{5,i}, \tilde k)$. Similar to Algorithm~\ref{alg:alg1}, Algorithm~\ref{alg:alg2} tackles \eqref{eq:sos_soft} by decomposing it into convex subproblems, and it also guarantees the improvement of the quality of the inner-approximation through iterations. $Y_{22}^0=0^{n_\psi}$ and a $M^0 \in \mathcal{M}$ can be used as initializations.
\begin{algorithm} [H]
	\caption{Iterative method for soft IQCs}
	\label{alg:alg2}
	\begin{algorithmic}[1]
		\Require{$V^0$, $M^0$ and $Y_{22}^0$ such that constraints \eqref{eq:sos_soft} are feasible by proper choice of $s_i,\tilde k, \gamma$.}
		\Ensure{($\tilde k$, $\gamma$, $V$, $M$, $Y_{22}$) such that with the volume of $\Omega_{0,\gamma}^V$ having been enlarged.}
		\For{$j = 1:N_{iter}$}
		\State $\boldsymbol{\gamma}$\textbf{-step}: decision variables $(s_i,  \tilde k, \gamma)$.
		
		Maximize $\gamma$ subject to \eqref{eq:sos_soft} using $V = V^{j-1}$,
		
		$M = M^{j-1}$ and $Y_{22} = Y_{22}^{j-1}$. 
		
		This yields ($s_2^j, s_{5,i}^j, \tilde k^j$) and optimal reward $\gamma^j$.
		\State $\boldsymbol{V}\textbf{-step}$: $(s_0, s_1, s_3, s_{4,i}, s_{6,i}, V,M, Y_{22})$ are  decision 
		
		variables. Maximize the feasibility subject to \eqref{eq:sos_soft} 
		
		as well as $s_0 \in \Sigma[(x,\tilde x)]$, and
		\begin{align}
		& (\gamma^j - V\vert_{t=0})  +  (V^{j-1}\vert_{t=0} - \gamma^j) s_0 \in \Sigma[(x,\tilde x)], \nonumber 
		\end{align}
		
		using $\gamma = \gamma^j, s_2 = s_2^j, s_{5,i} = s_{5,i}^j, \tilde k=\tilde k^j$. 
		
		This  yields $V^j$, $M^j$ and $Y_{22}^j$.
		\EndFor
	\end{algorithmic}
\end{algorithm}

\section{Numerical Examples}\label{sec:Numeric_ex}
In the following examples, the SOS optimization problem is formulated using the SOS module in SOSOPT \cite{Pete:13} on MATLAB, and solved by the SDP solver MOSEK \cite{Mosek:17}.
\subsection{Generic Transport Model (GTM) Example}
The GTM is a remote-controlled $5.5\%$ scale commercial aircraft \cite{Murch:07}. The longitudinal dynamics are approximated by a cubic degree polynomial model provided in \cite{Chakraborty:2011}:
\begingroup
\allowdisplaybreaks
\begin{align}
\dot{x}_1 &= - 1.492 x_1^3 + 4.239 x_1^2 + 0.003 x_1 x_2
+ 0.006 x_2^2 \nonumber \\
&  \quad - 3.236 x_1 + 0.923 x_2 + (0.240 x_1 - 0.317)u, \nonumber \\
\dot{x}_2 &= -7.228 x_1^3 + 1.103 x_2^3 + 18.365 x_1^2 
- 45.339 x_1 \nonumber \\
& \quad - 4.373 x_2 + (41.505 x_1 - 59.989) u, \nonumber
\end{align}
\endgroup
where $x_G = [x_1;x_2]$ is the state, $x_1$ is the angle of attack (rad), $x_2$ is the pitch rate (rad/s), and the control input $u$ is the elevator defection (rad). Assume the control input $u$ generated by the controller $C$ is corrupted by an additive uncertainty $\Delta$ exerted on the actuator, as shown in Fig.~\ref{fig:GTMdiagram}. The actual signal that goes into the elevator channel is $u_{elev} := u + w$, where $w$ is the output of $\Delta$. 
\begin{figure}[h]
	\centering
	\includegraphics[width=0.3\textwidth]{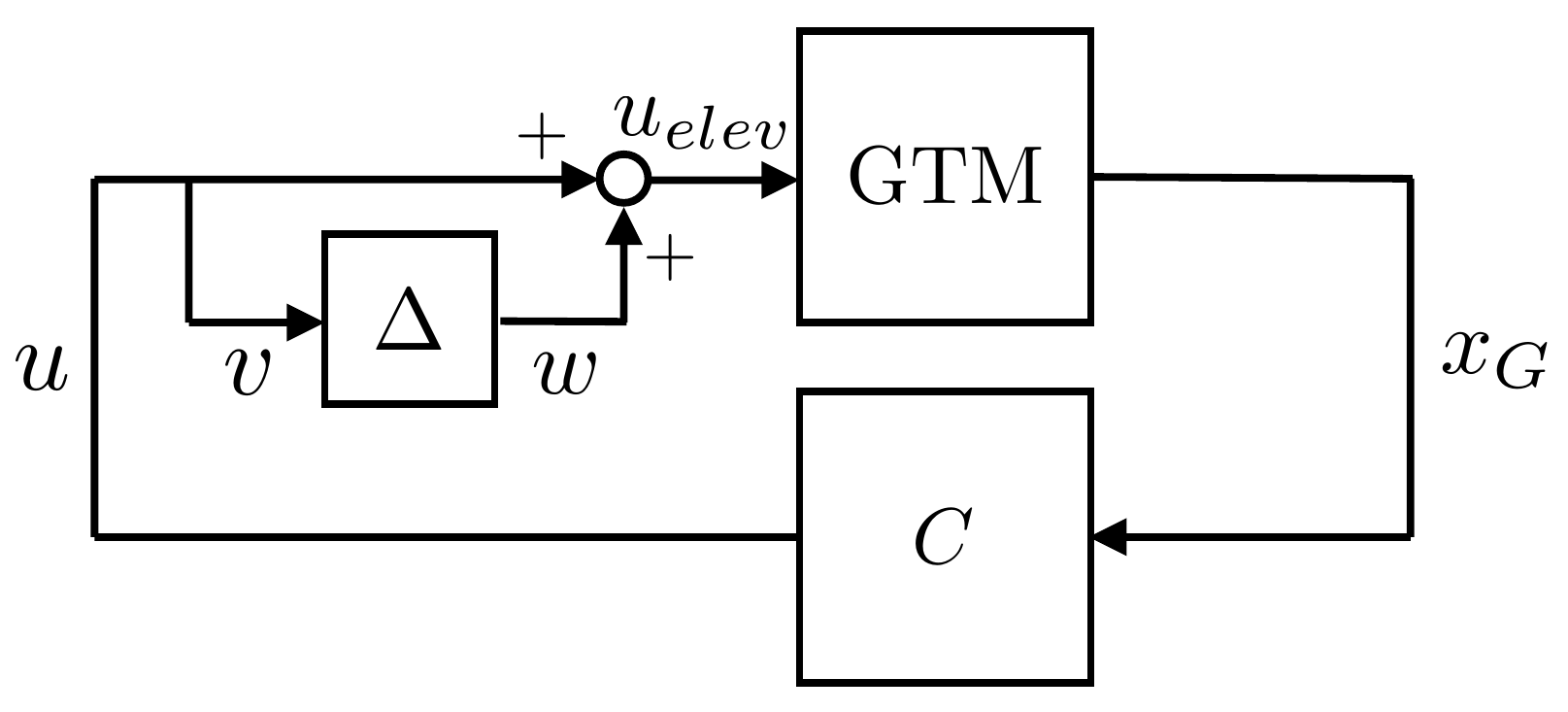}
	\caption{The diagram of the GTM with input perturbation}
	\label{fig:GTMdiagram}    
\end{figure}

\subsubsection{Sector IQCs} Assume that $\Delta$ lies within the sector $[\alpha, \beta]$, where $\alpha = 0$, and $\beta = 0.2$. The filter $\Psi$ and constraint set $\mathcal{M}$ given below define a hard IQC:
\begin{align}
\Psi = I_2, \ \mathcal{M} = \left\{\lambda \bmat{-2 \alpha \beta & \alpha + \beta \\ \alpha + \beta & -2}: \lambda \in \Sigma[(x_G,\tilde{x},w)] \right\}, \nonumber
\end{align}
where $\lambda$ is a polynomial decision variable, which introduces more freedom to the optimization.

Take the target set as $X_T = \{x_G| x_G^\top x_G \leq (\pi/27)^2 \}$ (shown in Fig.~\ref{fig:GTM} with red solid curve), and assume the actuator limit on $u$ is $|u(t)| \leq 0.261$ rad. Degree-4 polynomial storage functions are used to compute two inner-approximations on time horizons [0, 1 sec] and [0, 2 sec], which correspond to the blue dashed curve and black dotted curve in Fig.~\ref{fig:GTM}, respectively. Solid curves with crosses represent simulation trajectories starting from the inner-approximation with time horizon [0, 2 sec] in the presence of actuator uncertainty, and crosses represent different initial conditions. In Fig.~\ref{fig:GTM_control}, the simulations of control inputs for different initial conditions are shown. We note that they are all within the control limits during the time horizon.
\begin{figure}[h]
	\centering
	\includegraphics[width=0.35\textwidth]{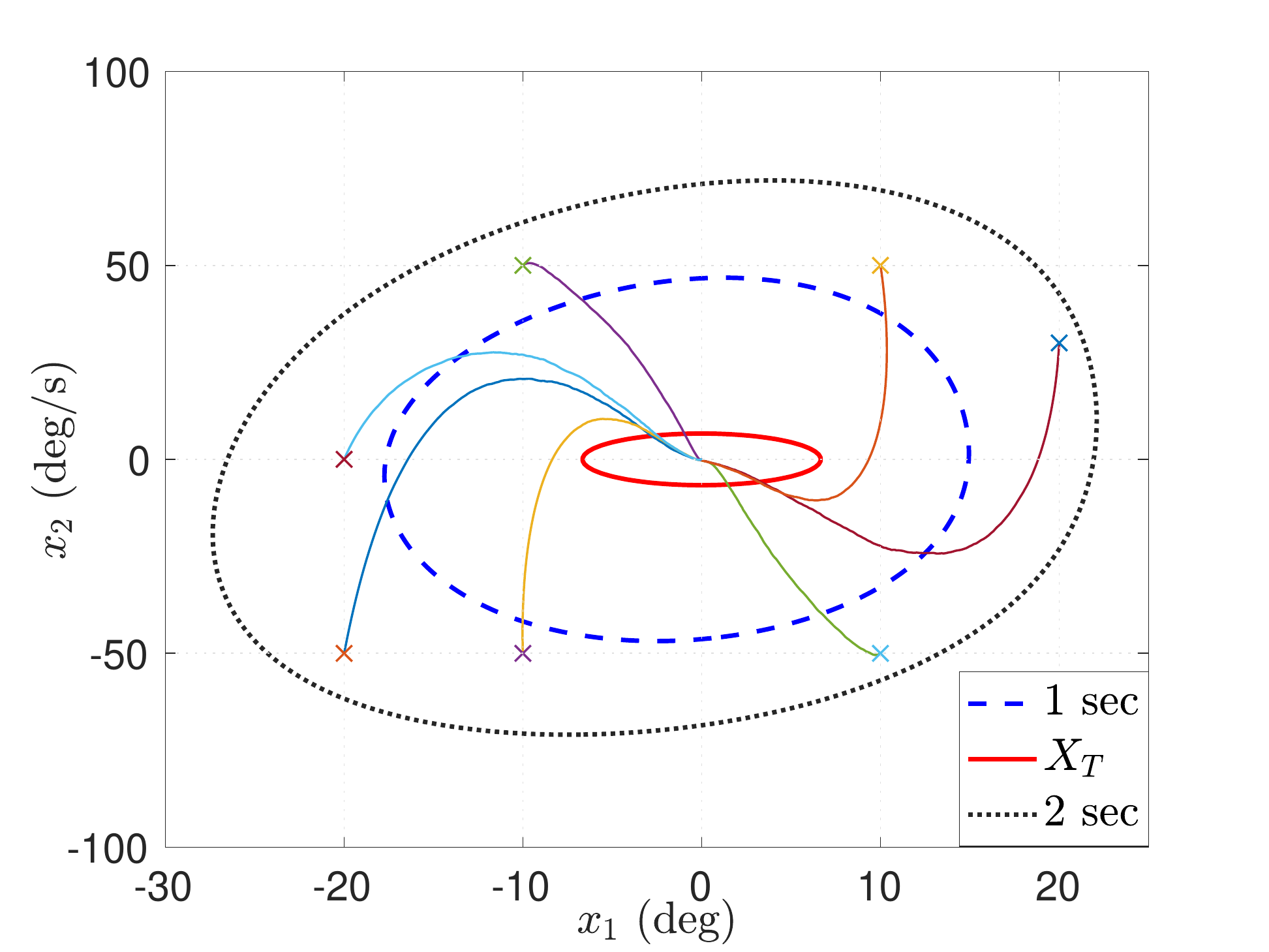}
	\caption{Simulation trajectories, and inner-approximations of the GTM example with the sector IQC for two time horizons.}
	\label{fig:GTM}    
\end{figure}
\vspace{-0.3in}
\begin{figure}[h]
	\centering
	\includegraphics[width=0.35\textwidth]{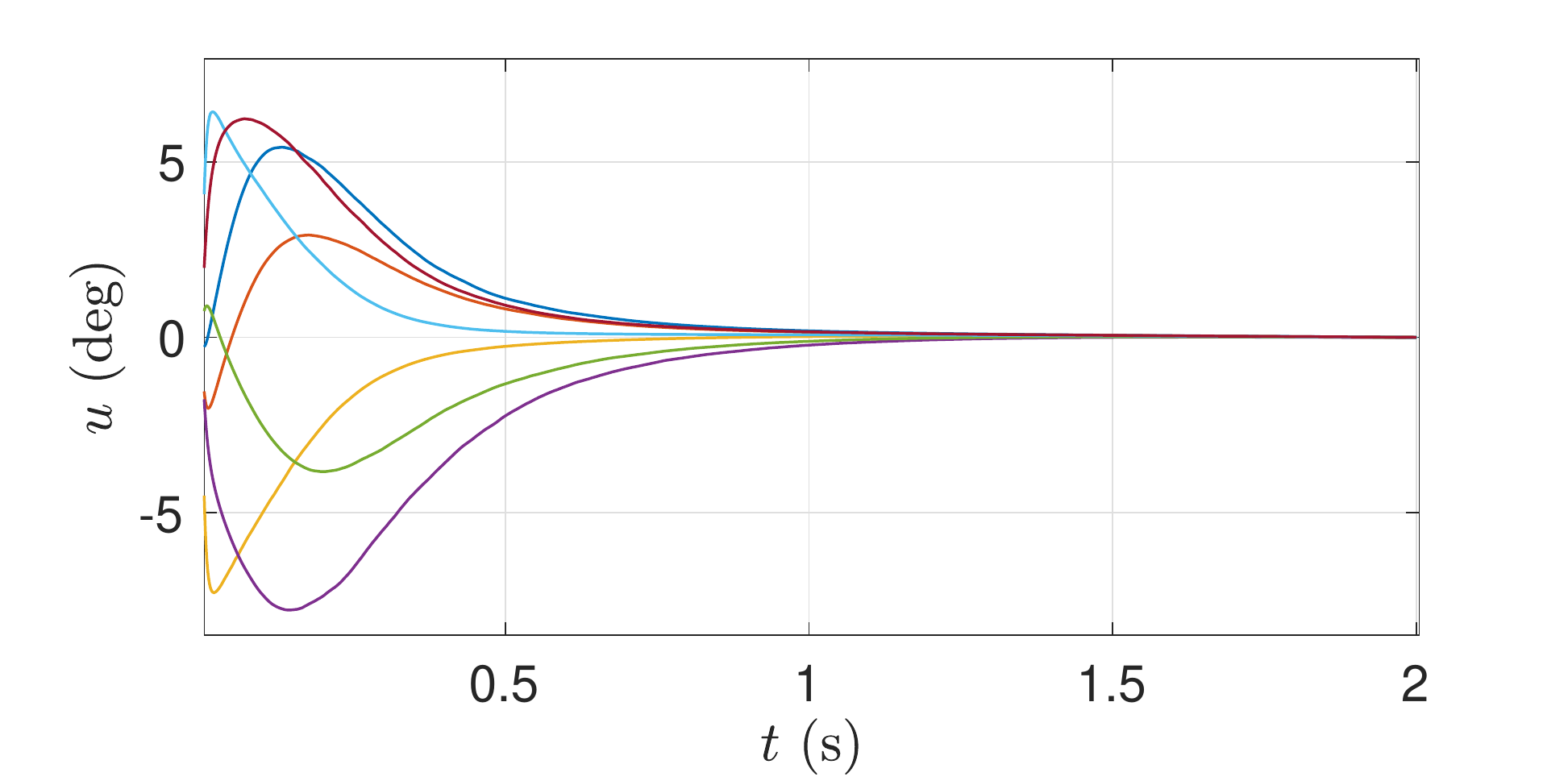}
	\caption{Simulations of control inputs}
	\label{fig:GTM_control}    
\end{figure}

\subsubsection{Hard and soft IQCs}
This time we assume that the perturbation $\Delta$ in Fig.~\ref{fig:GTMdiagram} is a time invariant parametric uncertainty: $w(t) = \Delta(v(t)) = \delta v(t)$, with $\delta \in \R$, $|\delta| \leq 0.2$. Therefore, the actual signal that goes into the elevator channel is $u_{elev} = u+w = (1+\delta) u$. As discussed in Example~\ref{ex:IQC_example_para}, $\delta$ satisfies both HardIQC$(\Psi,M_D)$ and SoftIQC$(\Psi,M_{DG})$. The backward reachability is performed using both kinds of IQCs. In both cases, we use the same filter $\Psi$, and choose $\Psi_{11}^{d,m}$ from \eqref{eq:filter_choice} with $m=10$ and $d=1$. Therefore, $\Psi$ introduces two filters states $x_\psi \in \R^{2}$ to the extended system. Take the time horizon as [0, 2 sec], and use the same target set and actuator limits from the previous example.

In Fig.~\ref{fig:GTM_soft_hard}, the inner-approximations computed using the hard and soft IQCs are shown with the dashed purple curve, and the dash-dotted black curve. We see that with soft IQC we are able to certify a larger inner-approximation. This is because the soft IQC has richer knowledge of the time invariant parametric uncertainty than the hard IQC.
\begin{figure}[h]
	\centering
	\includegraphics[width=0.35\textwidth]{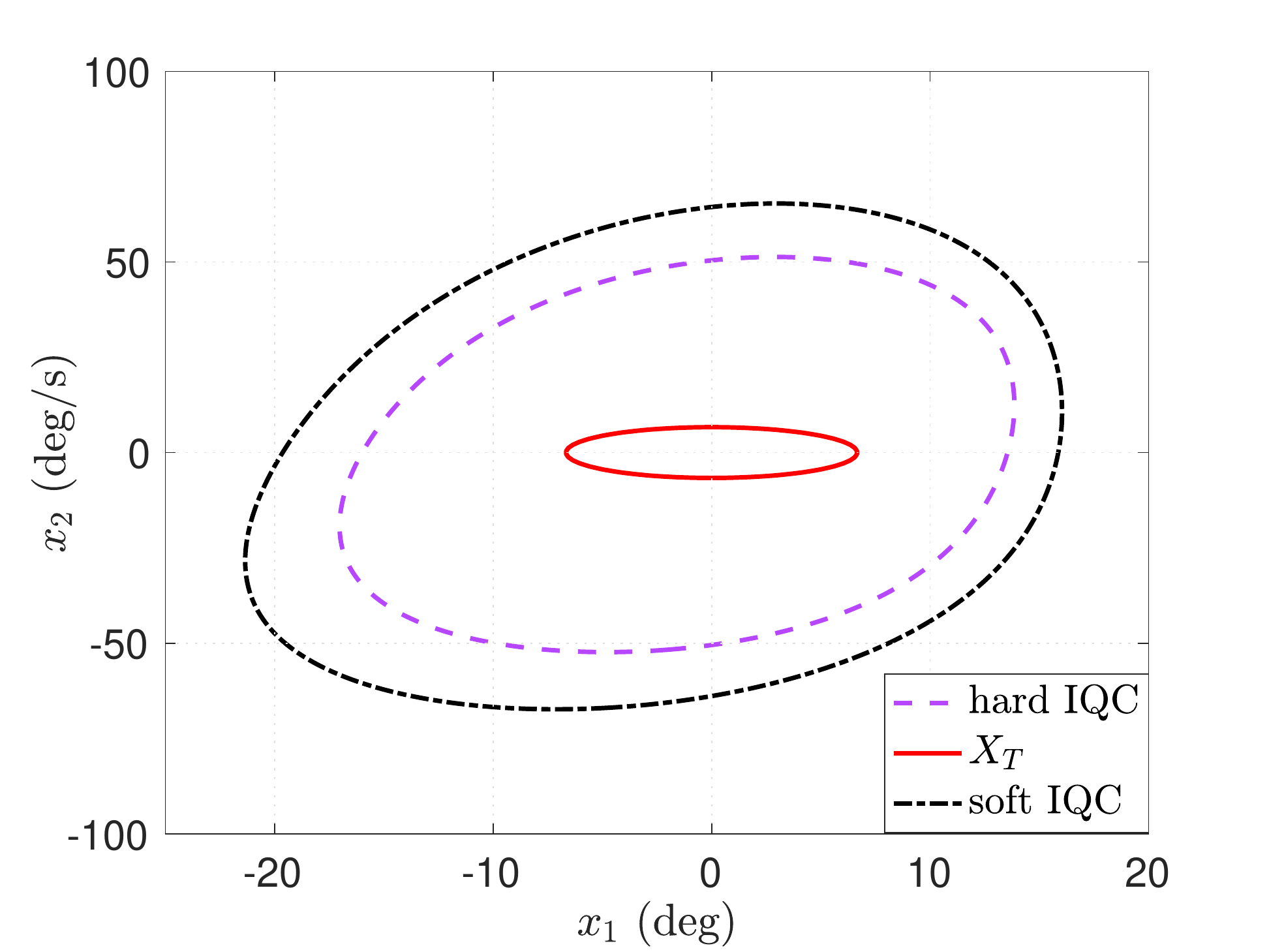}
	\caption{Inner-approximations with soft and hard IQCs}
	\label{fig:GTM_soft_hard}    
\end{figure}

\subsection{Quadrotor Example} \label{sec:quadrotor}
Consider the following 6-state planar quadrotor dynamics from \cite{Mitchell:19, Bouffard:12}:
\begingroup
\allowdisplaybreaks
\begin{align}
\dot{x}_1 &= x_3, \nonumber \\
\dot{x}_2 &= x_4, \nonumber \\
\dot{x}_3 &= u_1 K \sin(x_5), \nonumber \\
\dot{x}_4 &= u_1 K \cos(x_5) - g_n, \nonumber \\
\dot{x}_5 &= x_6, \nonumber \\
\dot{x}_6 &= -d_0 x_5 - d_1 x_6 + n_0 u_2, \nonumber
\end{align}
\endgroup
where $x_1$ to $x_6$ represent horizontal position (m), vertical position (m), horizontal velocity (m/s), vertical velocity (m/s), roll (rad), and roll velocity (rad/s), respectively. $u_1$ and $u_2$ represent total thrust and desired roll angle. Control saturation limits are $u_1(t) \in [-1.5, 1.5]+g_n/K$, and $u_2(t) \in [-\pi/12, \pi/12]$. Values for the constants are: $g_n = 9.8$, $K = 0.89/1.4$, $d_0 = 70$, $d_1 = 17$, and $n_0 = 55$. 

The control objective of this example is to design controllers for $u_1$ and $u_2$ to maintain the trajectories of the quadrotor starting from the BRS to stay within the safe set $X_t$ during the time horizon $[0, T]$ with $T = 2$. $X_t$ is given as $X_t = \{x_G: x_G^\top N x_G  \leq 1 \}$, where $N = diag(1/1.7^2, 1/0.85^2, 1/0.8^2, 1/1^2, 1/(\pi/12)^2, 1/(\pi/2)^2)$. $\sin(x_5)$ is approximated by $(-0.166 x_5^3+x_5)$ and $\cos(x_5)$ is approximated by $(-0.498 x_5^2+1)$, using least squares regression for $x_5 \in [-\pi/12, \pi/12]$. The validity of this bound on $x_5$ is guaranteed by the state constraint $X_t$. Assume that the control input $u_2$ is perturbed by an additive norm-bounded nonlinearity $\norm{\Delta}_{2\rightarrow 2,[0,T]} \leq 0.2$, which introduces one auxiliary state $\tilde x$ to the analysis. We use the hard IQC discussed in Example~\ref{ex:hardIQC}(b) with a fixed filter $\Psi$ and search for $M$ over the constraint set given in \eqref{eq:M_nonlinearity}. Inner-approximations to the BRS are computed using both degree-2 and degree-4 polynomial storage functions, with computation time of $1.1\times 10^3$ and $3.6\times 10^4$ seconds.

Fig.~\ref{fig:quadrotor} shows the projections of the resulting inner-approximations. The one computed using degree-2 storage function is shown with the solid magenta curve, and the one computed using degree-4 storage function is shown with the red dash-dotted curve. The projections of $X_t$ are shown with the blue solid curves.
\begin{figure}[h]
	\centering
	\includegraphics[width=0.48\textwidth]{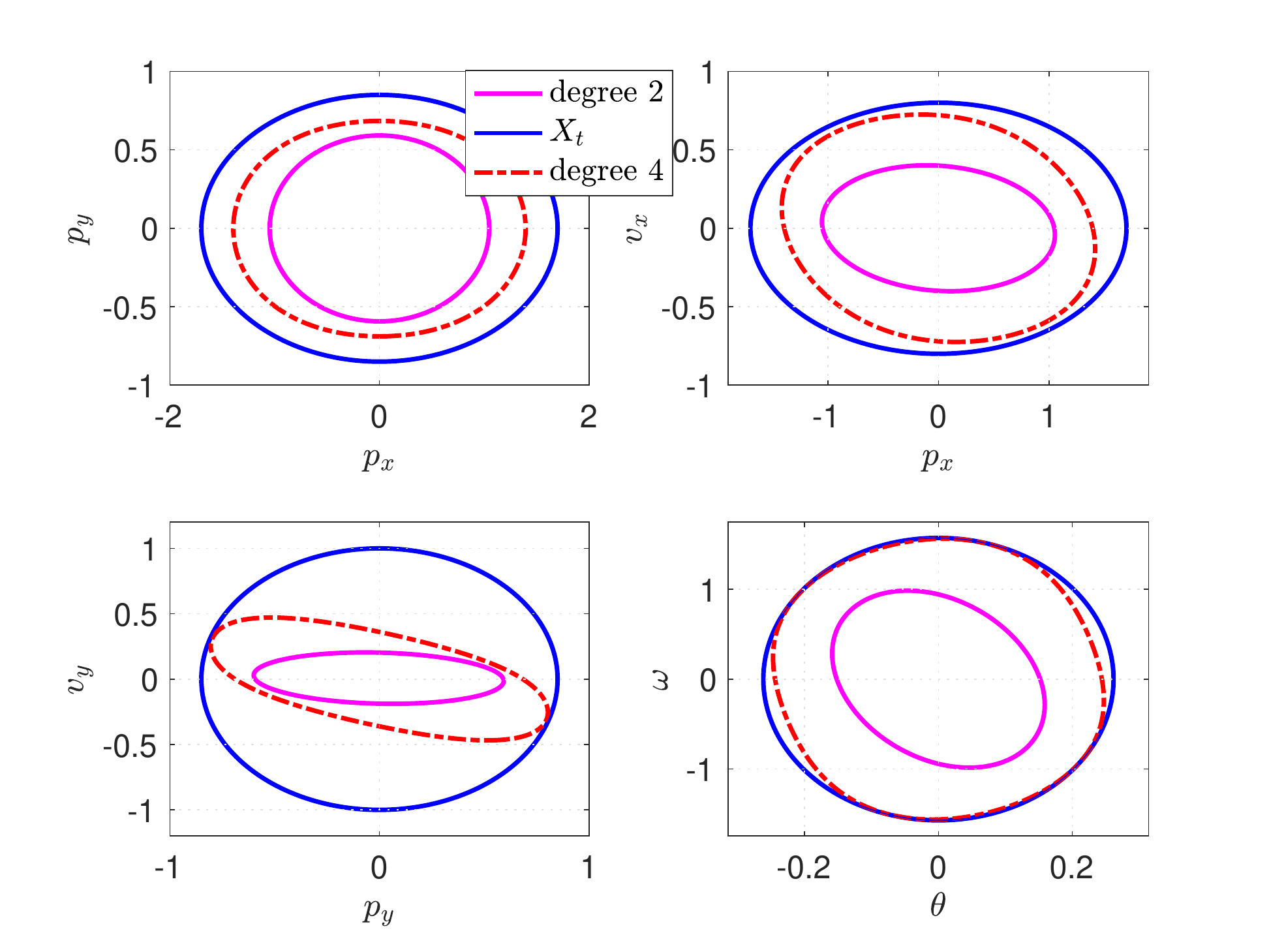}
	\caption{Inner-approximations to the BRS of quadrotor example with norm-bounded nonlinearity}
	\label{fig:quadrotor}    
\end{figure}

\section{Conclusions}\label{sec:conclusion}
In this paper, a method for computing robust inner-approximations to the BRS and robust control laws is proposed for uncertain nonlinear systems,  modeled as an interconnection of the nominal system $G$ and the perturbation $\Delta$. The proposed framework merges dissipation inequalities and IQCs, with both hard and soft factorizations. The use of IQCs enabled us to address a large class of perturbations, including uncertain time delay and unmodeled dynamics. The generalized S-procedure and sum-of-squares programming are used to derive computational algorithms. Finally, the effectiveness of the method is illustrated on uncertain nonlinear systems, including a 6-state quadrotor examples with actuator uncertainties.

\bibliographystyle{IEEEtran}
\bibliography{IEEEabrv,bibfile}

\end{document}